\title{MUDA: A Truthful Multi-Unit Double-Auction Mechanism}
\author{
Erel Segal-Halevi, Ariel University, Ariel, Israel\\
Avinatan Hassidim, Bar-Ilan University, Ramat-Gan, Israel\\
Yonatan Aumann, Bar-Ilan University, Ramat-Gan, Israel\\
}
\theoremstyle{definition}
\newtheorem{theorem}{Theorem}
\newtheorem{example}{Example}
\newtheorem*{remark*}{Remark}
\theoremstyle{plain}
\newtheorem*{lemma*}{Lemma}
\newcommand{\abs}[1]{\left|#1\right|}
\newcommand\range[2]{\in\{#1,\dots,#2\}}
\newcommand{\e}[1]{\exp{\left(#1\right)}}
\begin{document}
\maketitle
\begin{abstract}
In a seminal paper, McAfee (1992) presented a truthful mechanism for double auctions, attaining asymptotically-optimal gain-from-trade without any prior information on the valuations of the traders. McAfee's mechanism handles single-parametric agents, allowing each seller to sell a single unit and each buyer to buy a single unit.
This paper presents a double-auction mechanism that handles multi-parametric agents and allows multiple units per trader,
as long as the valuation functions of all traders have decreasing marginal returns. The mechanism
is prior-free, ex-post individually-rational, dominant-strategy truthful and strongly-budget-balanced. Its gain-from-trade approaches the optimum when the market size is sufficiently large.
\end{abstract}

\section{Introduction}
\label{sec:intro}
In a two-sided market, there are several sellers who hold items for sale and several buyers who consider buying these items. Examples are stock exchanges, used-car markets,
emission trading markets \citep{godby1999market,sturm2008double},
 Internet advertisements \citep{feldman2016double} and markets for spectrum reallocation \citep{leyton2017economics}.
Each trader has a different valuation to each bundle of items. 
In contrast to a one-sided market, here the valuations of both the buyers and the sellers are their private information, and both sides might act strategically.
A \emph{double auction} is a mechanism for organizing a two-sided market --- deciding who will buy, who will sell and at what prices.

An important requirement from a double-auction is \emph{efficiency}, which is measured by its \emph{gain-from-trade} (GFT) --- the total value gained by the buyers minus the total value contributed by the sellers. As an example, in a used-car market with a single buyer and a single seller holding a single car, if the seller values the car as $v_s$ and the buyer as $v_b>v_s$, then the potential GFT is $v_b-v_s$. 

The most commonly used double-auction mechanism is the \emph{Walrasian mechanism} \citep{rustichini1994convergence,babaioff2014efficiency}. It computes an \emph{equilibrium price} --- a price at which 
the supply equals the demand: the total number of units that sellers are interested to sell at this price equals the total number of units that buyers are interested to buy at this price. In a single-good market, an equilibrium price exists whenever the agents have \emph{decreasing-marginal-returns} (DMR) --- the marginal utility for an agent from having one more unit is weakly-decreasing in his current number of units \citep{gul2000english}. 
Moreover, by the First Welfare Theorem, this mechanism attains the maximum GFT \citep[Theorem 11.13]{nisan2007introduction}. Unfortunately, this mechanism is not \emph{incentive-compatible}  (IC) --- agents have an incentive to misreport their valuations in order to manipulate the price. 

The problem of designing an IC double-auction has already been considered by Vickrey. In his seminal paper \citep{vickrey1961counter} he described a variant of his famous second-price auction for a two-sided market. Like its one-sided variant, it is dominant-strategy incentive-compatible (DSIC, aka \emph{truthful}) --- it is a weakly-dominant strategy for each agent to reveal its true valuation function. But unlike the one-sided variant it is not \emph{budget-balanced} (BB) --- it has a deficit --- it requires the market-maker to bring money from home. While it is possible to charge entrance-fees to cover this cost, this makes the mechanism not \emph{individually rational} (IR) --- some traders might lose from participating. 

\citet{myerson1983efficient} proved that, in a two-sided market, \emph{any} mechanism that is IR, BB and IC cannot be efficient . Intuitively, the reason is that it is impossible to truthfully determine prices for trading.
Consider any mechanism that charges the buyer $p_b$ and pays the seller $p_s$. If $p_s < v_b$, then the seller is incentivized to bid $(p_s + v_b)/2$ to force the price up
(the mechanism has to do the deal since it is still efficient). Similarly, if $p_b > v_s$, the buyer is incentivized to force the price down. Setting $p_s=v_b$ and $p_b=v_s$ leads to a deficit. One way out is to determine take-it-or-leave-it prices independently of the traders' valuations, but this might result in a total loss of GFT.

This impossibility result initiated a search for double-auction mechanisms that are IC, IR and BB, and attain an \emph{approximately} maximal GFT. 
We define the \emph{competitive ratio} of a mechanism as the minimum ratio (over all utility profiles) of its GFT divided by the optimal GFT. 
The first approximation mechanism was presented by McAfee
\citet{mcafee1992dominant} for the case when each seller has a single unit and each buyer wants a single unit. 
McAfee's mechanism is truthful --- it is a weakly-dominant strategy for each agent to report his true valuation for the item. 
Its competitive-ratio is $1-1/k$, where $k$ is the number of units traded in the optimal situation.  (i.e, its GFT is always at least $1-1/k$ of the maximum GFT). Thus, McAfee's mechanism is \emph{asymptotically efficient} --- when the market-size $k$ grows to infinity, the GFT approaches the maximum.
In addition to being truthful, IR, BB and asymptotically-efficient, McAfee's mechanism has a fifth nice property --- it is \emph{prior-free} (PF). This means that it does not require any statistical information about the traders' valuations; it works well even for worst-case (adversarial) valuations. 

The main drawback of McAfee's mechanism is that it works only for single-unit traders.  Another potential drawback is that it is only \emph{weakly-budget-balanced} (WBB) --- the market-maker need not bring money, but may the have to take money. Moreover, in some cases the market-maker consumes almost all the GFT, leaving only little GFT to the traders \citep{segal2016sbba}. 
This may be acceptable when the market-maker is a monopolist (e.g. a government),
but might be problematic 
when the market-maker faces competition, e.g. in stock exchange platforms, since a low GFT for the traders might drive them away to the competitors.
\footnote{Technically it is possible to convert a WBB mechanism to a SBB one: randomly pick an agent before the trade, disallow him to participate in the trade, and give him all the surplus after the trade. However, this might induce a lot of people without any interest in the trade (e.g, ``sellers'' with no goods or ``buyers'' with no money) to participate in the auction, in the hope of winning the lottery. We believe this goes against the idea of truthfulness, and might have adverse effects on the market. 
}

This paper presents MUDA --- a double-auction mechanism for traders that buy and sell multiple units. The only assumption required is that all traders have decreasing-marginal-returns; this is the same assumption that guarantees the existence of market-equilibrium. 

The main idea of MUDA is to calculate equilibrium prices using \emph{random halving}. The general scheme is presented below; it is explained in more detail in Section \ref{sec:MUDA-details}.
\begin{framed}
\noindent\textbf{MUDA} (general scheme):
Split the market to two sub-markets, left and right, by sending each trader to each side with probability 1/2, independently of the others.
Then, in each sub-market:
\begin{enumerate}
\item Calculate a market-equilibrium price ($p^R$ at the right, $p^L$ at the left).
\item Let the agents trade at the price from the other market ($p^L$ at the right, $p^R$ at the left).
\end{enumerate}
\end{framed}
The random-sampling technique was found useful in  one-sided markets \citep{Goldberg2001Competitive,Goldberg2006Competitive,devanur2015envy,balcan2008reducing,balcan2007random} 
and matching markets \citep{devanur2009adwords}. 
However, applying it to a two-sided market poses a new challenge: \emph{material balance} --- the number of units bought and sold must be exactly equal. This is in contrast to a one-sided market, where the seller may leave some units unsold. Random-sampling in two-sided markets was used only with one-parametric agents --- agents whose valuations are characterized by a single signal \citet{baliga2003market,kojima2014double}.

With random-sampling, the price at each sub-market is determined exogenously based on the other sub-market, so the agents cannot manipulate it by reporting strategically. However, because of the material-balance requirement, the traders have another reason to report strategically --- they might try to manipulate the \emph{quantity} of trade. Thus, the multi-unit two-sided setting is more difficult than both the single-unit two-sided setting of McAfee and the multi-unit one-sided setting common in the random-sampling literature. To illustrate this added difficulty, we prove:
\begin{theorem}
\label{thm:negative}
Suppose a single seller holds $M$ units of a good and a single buyer considers buying at most $M$ units of this good and the price per unit of the good is determined exogenously. Then, the expected competitive ratio of every DSIC and IR mechanism, whether deterministic or randomized, is:
\begin{itemize}
\item at most $1/M$ for agents with general valuations, and ---
\item at most $1/H_M$ for agents with DMR valuations, where $H_M$ is the $M$-th harmonic number ($H_M \approx \ln{M}+\frac{1}{M}$).
\end{itemize}
Both these upper bounds are tight.
\end{theorem}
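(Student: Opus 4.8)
The plan is to establish each bound by showing that \emph{every} DSIC and IR mechanism is beaten on some profile drawn from a small, carefully designed family, and then to pin down the exact constant (and its tightness) by a linear-programming / minimax argument over that family. By the revelation principle I may take the mechanism to be direct; since the per-unit price $p$ is exogenous, the mechanism's entire decision is a (possibly randomized) trade quantity $q$, with the buyer paying $pq$ and the seller receiving $pq$, so budget-balance is automatic. Two elementary facts organize everything. First, at a truthful profile the gain-from-trade of trading $q$ units equals the sum of the buyer's surplus $v_b(q)-pq$ and the seller's surplus $pq-\bigl(v_s(M)-v_s(M-q)\bigr)$; hence IR forces the realized $q$ into the set where both surpluses are non-negative, and on that set the gain-from-trade is itself non-negative. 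Second, under DMR this gain-from-trade is a concave function of $q$ vanishing at $q=0$, so trading $q$ of an optimal $q^*$ units already captures at least a $q/q^*$ fraction of the optimum; this ``partial credit,'' unavailable for general valuations, is precisely what makes the DMR bound larger than the general one.

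For the DMR bound, for each target $k\in\{1,\dots,M\}$ I use the instance $I_k$ in which the buyer values every unit far above $p$ --- so the optimum is to trade as much as possible --- and the seller's valuation is tuned so that: trading more than $k$ units is ruinous for the seller (pinning both the welfare-optimal and the IR-maximal quantity at $k$, with $\mathrm{OPT}(I_k)=\Theta(k)$); the gain-from-trade is essentially linear on $\{0,\dots,k\}$ (so the mechanism's ratio on $I_k$ is $\approx m_k/k$, where $m_k$ is the expected traded quantity); and the seller \emph{strictly prefers to trade as few units as possible}, concretely with utility $u_s^{(k)}(q)=\eta_k\frac{k-q}{k-1}$ for $1\le q\le k$. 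Writing $z_k$ for the probability of no trade on $I_k$, IR gives ``$q\le k$ almost surely'' (so $m_1\le 1$ and $z_1=1-m_1$), and ``type $k$ does not gain by reporting some $k'<k$'' becomes, after the common factor $\eta_k$ cancels, $m_{k'}+k\,z_{k'}\ge m_k+k\,z_k$. Telescoping the case $k'=k-1$ over $k=2,\dots,M$ and using $z_M\ge 0$ yields $\frac{m_1}{2}+\sum_{k=2}^{M-1}\frac{m_k}{k(k+1)}+\frac{m_M}{M}\le 1$; if every ratio $m_k/k$ exceeded $1/H_M$, the left-hand side would exceed $\frac1{H_M}\sum_{k=1}^M\frac1k=1$, a contradiction. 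Hence some $I_k$ forces ratio $\le 1/H_M$. Reading the same LP in the other direction produces the extremal mechanism --- on a report whose IR-maximal quantity is $r$, trade $q$ units with probability $\frac{1}{q H_M}$ for $q=1,\dots,r$ and trade nothing otherwise --- which one checks to be DSIC and IR with competitive ratio exactly $1/H_M$, giving tightness.

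For the general bound the skeleton is the same, but now the valuations (which need not be monotone, hence need not be DMR) are chosen so that on $I_k$ the gain-from-trade is \emph{concentrated at the single quantity $k$}: any quantity other than $k$ yields no positive gain. There is then no partial credit, the ratio on $I_k$ equals the probability $\pi_k$ that the mechanism trades exactly $k$ units, and the incentive and IR constraints across the family collapse to a single ``budget'' inequality $\sum_{k=1}^M\pi_k\le 1$ --- morally, a mechanism cannot commit, across these $M$ mutually-confounding instances, to hitting every target with high probability. This forces $\min_k\pi_k\le 1/M$, i.e.\ ratio $\le 1/M$ on some $I_k$; the matching mechanism simply draws a target $k$ uniformly at random (and trades $k$ units when IR permits, nothing otherwise), which is obviously DSIC, IR and budget-balanced and attains competitive ratio $1/M$.

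The step I expect to be the real obstacle is the design of the families $\{I_k\}$ so that DSIC and IR genuinely tie the mechanism's hands --- in particular, defeating the objection ``the reported valuations reveal the target $k$, so just trade $k$ units.'' The resolution is that the seller must be made to \emph{strictly} prefer under-reporting (so that committing to the welfare-optimal quantity creates a profitable deviation) while still being IR-content with a whole band of quantities, and the welfare-optimal quantity must be \emph{different} from the seller's preferred one. Simultaneously arranging monotonicity (for the DMR bound) or the right non-monotonicity (for the general bound), the location of the optimum, the linear resp.\ concentrated shape of the gain-from-trade, and the tent-shaped seller utility $u_s^{(k)}$, and then verifying that the induced incentive constraints really are the telescoping family above, is the delicate part; the two LP computations and the check that the extremal solutions are bona fide randomized mechanisms are then routine.
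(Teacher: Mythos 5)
First, note that this paper does not actually contain a proof of Theorem~\ref{thm:negative} (it defers to the arXiv/companion version), so I am judging your argument on its own merits. Your DMR half is essentially sound. The family $I_k$ (dummy buyer with per-unit value far above $p$; seller with marginal costs $p-\eta_k$, then $p+\eta_k/(k-1)$ repeated, then prohibitive) is a legitimate DMR profile, ex-post IR at a truthful report $k'$ does confine the support to $\{0,\dots,k'\}$, the deviation $k\to k'$ does reduce to $m_{k'}+kz_{k'}\ge m_k+kz_k$, and the telescoped inequality $\frac{m_1}{2}+\sum_{k=2}^{M-1}\frac{m_k}{k(k+1)}+\frac{m_M}{M}\le 1$ together with $\sum_{k=1}^M \frac1k\cdot\frac1{H_M}=1$ gives the $1/H_M$ bound (up to the routine $\varepsilon$-limit you gloss). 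Your tightness mechanism (trade $q$ w.p. $\frac{1}{qH_M}$ for $q\le r$) is indeed DSIC and ex-post IR by concavity of each agent's gain, and achieves a $1/H_M$ fraction of the optimum \emph{attainable at the exogenous price under IR}; you should state explicitly that this price-constrained quantity is the benchmark, since against the unconstrained first-best no IR mechanism has positive worst-case ratio (take $p$ above all buyer marginals), so tightness only makes sense for the constrained benchmark — your hard instances are consistent with it because there the two optima coincide.

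The genuine gap is the general-valuations upper bound of $1/M$. Your sketch — ``choose $I_k$ so that the gain-from-trade is concentrated at the single quantity $k$; then the IC and IR constraints collapse to $\sum_k\pi_k\le 1$'' — does not work as stated, and no construction is given. If the opposing agent is held fixed across the family (as in your DMR construction) and on $I_k$ the only IR quantity with positive gain is $k$, then the informed agent's utility is maximal at his own target and nonpositive at every other type's target, so the mechanism ``read $k$ off the report and trade $k$ units'' is DSIC and IR and has ratio $1$ on the whole family; no budget inequality on the $\pi_k$'s can be derived. This is exactly the objection you flag in your last paragraph, and your stated resolution (make the agent \emph{strictly} prefer other types' allocations) is in direct tension with ``no positive gain at any quantity other than $k$'': the deviator can only profit at quantities where the other agent's gain is negative, which ex-post IR forbids unless the other agent's type varies too — so the hard family must confound \emph{both} sides, and the resulting IC system is a two-dimensional matrix of constraints, not a single chain that ``collapses.'' As it stands you have proved only the weaker bound $1/H_M$ for general valuations (since DMR instances are a subclass); the $1/M$ impossibility, which is the harder half here, is missing. (Your matching mechanism for that half — draw $k$ uniformly and trade $k$ units iff both reports make it IR — is fine: it is DSIC, ex-post IR, and hits the constrained optimum with probability $1/M$.)
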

The proof is given in \citet{segalhalevi2017truthful}.
Theorem \ref{thm:negative} can be seen as a dual to the Myerson--Satterthwaite impossibility theorem. In their setting, the quantity is determined exogenously and the traders might manipulate the price; in our setting, the price is determined exogenously and the traders might manipulate the quantity.

We present two variants of MUDA that overcome this impossibility when the market is sufficiently large. We call them \emph{Lottery-MUDA} and \emph{Vickrey-MUDA}. 

Step 1 is the same in both variants. Step 2 
in both variants 
starts by calculating the aggregate demand and aggregate supply in each sub-market, at the prices calculated in the other sub-market. 
Usually, the aggregate demand will be larger or smaller than the aggregate supply, so the sub-market will have a \emph{long side} and a \emph{short side} (e.g, if there is more demand than supply then the buyers are the long side and the sellers are the short side). The short side can always trade their optimum quantity; i.e, if the buyers are short, we can let each buyer buy the number of units that maximizes his utility given the price. The two variants of MUDA differ in the way they handle the long side: in \emph{Lottery-MUDA} the traders in the long side are selected using a random permutation that ignores their values, while in \emph{Vickrey-MUDA} the high-value traders in the long side are selected. 
In Vickrey-MUDA, each selected trader pays the market-maker a positive trading-fee
calculated as in a Vickrey auction; this fee is separate from the money transferred among the traders
(the inspiration to this idea came from a recent working paper by \citet{loertscher2016dominant}).

\begin{theorem}
\label{thm:strategic}
Both variants of MUDA are prior-free, dominant-strategy incentive-compatible, individually-rational and budget-balanced (no deficit). In addition, Lottery-MUDA is strongly-budget-balanced (no surplus).
\end{theorem}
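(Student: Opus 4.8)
The plan is to verify the stated properties more or less independently, with essentially all of the work going into incentive-compatibility. \emph{Prior-freeness} and \emph{budget-balance} are structural. The mechanism consults only the reported valuation functions and its own random coins, never a distribution, so prior-freeness is immediate. For budget-balance the key fact is \emph{material balance} inside each sub-market: the short side trades exactly its aggregate reported quantity at the per-unit price imported from the other sub-market, and the long side is trimmed so that the total quantity allocated to it equals that of the short side (the smaller of aggregate demand and aggregate supply). Since every traded unit is bought and sold at one and the same per-unit price, the money paid in by buyers equals the money paid out to sellers---exactly---within each sub-market. In Lottery-MUDA this is the only transfer, so the mechanism is strongly budget-balanced; in Vickrey-MUDA each selected long-side trader additionally pays a non-negative Vickrey fee to the market-maker, so the mechanism runs a (weak) surplus but never a deficit, non-negativity of the fee being the usual non-negativity of a VCG payment.

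\emph{Individual rationality and a universal utility bound.} Fix truthful reports, condition on the random partition, and suppose trader $t$ lands in (say) the left sub-market facing imported per-unit price $p$. By decreasing marginal returns the bundle size maximizing $t$'s quasi-linear utility at price $p$ is a well-defined quantity $q^{*}$ (up to ties), $t$'s utility $u_t(\cdot)$ is non-decreasing on $\{0,\dots,q^{*}\}$, and every one of the first $q^{*}$ units carries non-negative net value. Whatever $t$ reports, it trades some quantity $Q$ at price $p$ and pays a fee $F\ge 0$ (with $F=0$ in Lottery-MUDA and whenever $t$ ends up on the short side), so its net utility is $u_t(Q)-F\le u_t(Q)\le u_t(q^{*})$: this is a universal upper bound. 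A truthful $t$ is never made to trade more than its reported demand (resp.\ supply), namely $q^{*}$, so it trades some $q\le q^{*}$, earns $u_t(q)\ge 0$ from the trade, and pays a Vickrey fee that by the standard VCG bound never exceeds $u_t(q)$; hence individual rationality holds.

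\emph{Incentive-compatibility} is the heart of the matter, argued again conditional on the partition with $t$ in the left sub-market. The first and conceptually central observation is that the price $p=p^{R}$ that $t$ faces is computed solely from the \emph{other} sub-market's reports, hence is independent of $t$'s own report: $t$ cannot manipulate its price, only the quantity it trades and (in Vickrey-MUDA) the fee it pays. If $t$ is on the short side under truthful play, a truthful report has it trade exactly $q^{*}$ with no fee, thus attaining the universal bound $u_t(q^{*})$, and no deviation can beat it. If $t$ is on the long side under truthful play, then by the DMR facts it never wants more than $q^{*}$ units, so it suffices to rule out (a) misreporting while staying on the long side and (b) misreporting so as to cross to the short side. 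For (a) in Lottery-MUDA, the short-side capacity still unclaimed when $t$'s randomly drawn turn arrives depends only on the \emph{other} long-side traders' reported quantities and on the permutation, not on $t$'s report; so $t$ receives $\min(q^{*},\text{residual})$ and monotonicity of $u_t$ makes any other report weakly worse. For (a) in Vickrey-MUDA, the long side's outcome is precisely the welfare-maximal allocation of the short-side capacity among the long-side traders' surplus functions (concave, by DMR) together with the matching VCG fees, a sub-mechanism in which reporting one's true surplus function is dominant by the standard VCG argument (the capacity and the other participants are fixed parameters from $t$'s viewpoint).

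Case (b), crossing sides, is the step I expect to be the main obstacle. Suppose $t$ is a buyer on the long side who shrinks its reported demand to pull aggregate demand below aggregate supply. This is possible only when the other buyers' aggregate demand already lies strictly below the sellers' aggregate supply $\hat S$; write $\bar q=\hat S-(\hat D-q^{*})>0$ for the slack, where $\hat D$ is the true aggregate demand. Any such crossing deviation must report demand at most $\bar q$ and therefore earns at most $u_t(q')$ with $q'\le\bar q$. But a truthful report keeps $t$ on the long side, where---even if every other long-side trader is served ahead of $t$---they consume only $\hat D-q^{*}<\hat S$ units; hence $t$ is \emph{guaranteed} at least $\bar q$ units: deterministically in Lottery-MUDA, and in Vickrey-MUDA because $t$ competes against only $\hat D-q^{*}<\hat S$ rival units, so a routine VCG exchange argument leaves $t$ with a net utility, fee included, of at least $u_t(\bar q)$. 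Since $u_t$ is non-decreasing on $\{0,\dots,q^{*}\}$ and $\bar q\le q^{*}$, the truthful outcome is worth at least $u_t(\bar q)\ge u_t(q')$, so the deviation does not help; the seller-on-the-long-side case is symmetric, and $t$ cannot profitably pose as the opposite type, having then either no inventory to sell or no demand to buy. Finally, manipulating the equilibrium price computed in $t$'s \emph{own} sub-market is pointless, as that price is used only to price trades in the other sub-market and never affects $t$. Since the above holds conditional on every realization of the partition (in expectation over the lottery coins), truthful reporting is weakly dominant, which completes the argument.
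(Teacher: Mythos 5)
Your proof is correct and follows essentially the same route as the paper's (the brief argument in the theorem's proof plus Section \ref{sec:MUDA-properties}): prior-freeness by design, budget balance via the money-flow/material-balance accounting, IR from DMR plus the Vickrey-fee bound, and DSIC from the exogeneity of the imported price together with random-serial-dictatorship reasoning for the lottery long side and the multi-unit Vickrey argument for the Vickrey long side. The only real difference is that you explicitly check the side-switching deviations (a long-side trader shrinking its report to become short), a case the paper's terse argument leaves implicit; this is a useful elaboration rather than a different approach.
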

\begin{proof}
Prior-freeness holds by design: MUDA does not use any statistical information on the valuations. 

DSIC and IR will be proved in Section \ref{sec:MUDA-properties} after presenting the mechanism details. 

Lottery-MUDA is strongly-budget-balanced (SBB) since all money goes from buyers to sellers; the market-maker neither brings nor takes any money. 
Vickrey-MUDA is only \emph{weakly budget-balanced (WBB)} --- the market-maker does not lose money, but may make profit from  trading-fees. 
\end{proof}
We emphasize that the properties are true \emph{ex-post} --- for every outcome of the randomization in the mechanism.

We distinguish between the \emph{total-GFT}, which includes 
the gain of both the agents and the market-maker,
and the \emph{agents-GFT}, which 
includes only the gain of the agents. 
Since Lottery-MUDA is SBB, the agents-GFT equals the total-GFT.
In Vickrey-MUDA, the total-GFT is always higher since the most profitable deals are selected. However, the agents-GFT might be much lower. In a particular example shown in the arXiv version, the total-GFT is high but the agents-GFT is near 0. 
Thus, Vickrey-MUDA may be preferred when the market-maker is a monopolist (e.g, a government), while Lottery-MUDA may be preferred when the market-maker faces competition (e.g, a stock-exchange).

The competitive-ratio of MUDA depends on
$k$ --- the number of units traded in the optimal situation (the $k$ used by McAfee 1992),
and
$M$ --- the maximum number of units offered(demanded) by a single seller(buyer)%
.
$M$ represents the market concentration --- how much market power is held by a single trader. 
\begin{theorem}
\label{thm:competitive}
The expected total-GFT of both Lottery-MUDA and Vickrey-MUDA is at least a fraction $1-O(M\sqrt{\frac{\ln k}{k}})$ of the maximum total-GFT.
\end{theorem}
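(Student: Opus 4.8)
The plan is to lower-bound MUDA's random gain-from-trade by comparing it, sub-market by sub-market, with the optimal total gain-from-trade $\mathrm{OPT}$; the driving intuition is that random halving makes each sub-market a scaled copy of the whole market, so its market-equilibrium price is close to that of the whole market, and letting a sub-market trade at the price imported from the other sub-market is only slightly sub-optimal, the error being governed by a concentration bound. Since Vickrey-MUDA selects the highest-value long-side traders and, in addition, collects non-negative fees that count toward total-GFT, its total-GFT is at least the expected GFT of Lottery-MUDA, so it suffices to bound the latter. Using decreasing marginal returns, replace each trader by its ``units'' (marginal values/costs); in the whole market sort the buyer-units by decreasing value $v_1\ge v_2\ge\cdots$ and the seller-units by increasing cost $c_1\le c_2\le\cdots$, let $k$ be the optimal number of traded units, so $\mathrm{OPT}=\sum_{\ell=1}^{k}(v_\ell-c_\ell)$ with non-negative, \emph{non-increasing} summands and an equilibrium price $p^{*}$ at which supply $=$ demand $=k$. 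For a sub-market $X\in\{L,R\}$ let $S^{X}(\cdot),D^{X}(\cdot)$ be its supply/demand, $p^{X}$ its equilibrium price, $k^{X}=S^{X}(p^{X})=D^{X}(p^{X})$ its equilibrium quantity, and $\mathrm{OPT}^{X}$ its optimum, again a sum of $k^{X}$ non-increasing non-negative terms $B^{X}_\ell-C^{X}_\ell$.

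First I would analyze a single sub-market, say $L$, trading at the foreign price $p^{R}$. It trades $q:=\min(S^{L}(p^{R}),D^{L}(p^{R}))\le k^{L}$ units, and a short computation shows the realized GFT is exactly $\sum_{\ell=1}^{q}(B^{L}_\ell-C^{L}_\ell)$ in Vickrey-MUDA (for Lottery-MUDA the expectation over the random permutation of the long side is a slightly smaller but analogous quantity, with an additional $O(M)$ slack coming from the indivisibility of a trader's bundle). Because the summands $B^{L}_\ell-C^{L}_\ell$ are non-increasing and non-negative, the omitted $k^{L}-q$ of them contribute at most a $\tfrac{k^{L}-q}{k^{L}}$-fraction of $\mathrm{OPT}^{L}$, so sub-market $L$ realizes at least $(1-\tfrac{k^{L}-q}{k^{L}})\,\mathrm{OPT}^{L}$; and $k^{L}-q$ is at most the number of $L$-units whose value or cost lies strictly between $p^{L}$ and $p^{R}$ --- moving the price away from equilibrium drops the equilibrium quantity by exactly the mass of units the price passes over. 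Hence, writing $\Delta^{X}$ for that count of ``marginal'' units, $\text{MUDA GFT}\ge(1-\Delta^{L}/k^{L})\,\mathrm{OPT}^{L}+(1-\Delta^{R}/k^{R})\,\mathrm{OPT}^{R}$.

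It then remains to establish two facts by concentration over the random split: (a) \emph{splitting loses little} --- $\mathrm{OPT}^{L}+\mathrm{OPT}^{R}\ge\mathrm{OPT}-(\text{GFT of }O(W)\text{ units})$, where $W:=|D^{L}(p^{*})-S^{L}(p^{*})|$; this is shown by performing, within each sub-market, the same-side trades of the global optimum and then re-matching the at most $W$ ``crossed'' units, whose GFT is controlled because they are among the $k$ optimal units; and (b) \emph{the relevant counts are small} --- the key observation is that the number of $X$-units between $p^{*}$ and $p^{X}$ \emph{equals} $|D^{X}(p^{*})-S^{X}(p^{*})|$ (the equilibrium quantity $k^{X}$ telescopes away), and this quantity, like $W$ and the analogous counts involving the foreign price, is a sum of independent terms each bounded in $[0,M]$ with mean zero (or a controlled mean), so Hoeffding/Bernstein gives $W,\Delta^{L},\Delta^{R}\le t:=\Theta(M\sqrt{k\ln k})$ and $k^{L},k^{R}\ge k/2-t$ with probability at least $1-O(1/k)$.

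Combining: on the good event $\text{MUDA GFT}\ge(1-O(t/k))(\mathrm{OPT}^{L}+\mathrm{OPT}^{R})\ge(1-O(t/k))^{2}\,\mathrm{OPT}$, where I use the non-increasing-summands fact once more to turn ``at most $t$ marginal units lost'' into ``at most $\tfrac{t}{k-t}\mathrm{OPT}$ lost'', while the complementary event has probability $O(1/k)=o(M\sqrt{\ln k/k})$ and forfeits at most $\mathrm{OPT}$, so it is absorbed into the error term; since $t/k=\Theta(M\sqrt{\ln k/k})$ this yields $\mathbb{E}[\text{MUDA GFT}]\ge(1-O(M\sqrt{\ln k/k}))\,\mathrm{OPT}$. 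The hard part will be step (b) and its interface with steps 1--2: one must control both sub-markets \emph{and} both random imported prices simultaneously, obtain bounds that are valid near the equilibrium rather than at a single fixed price (monotonicity of $S^{X},D^{X}$ together with the telescoping identity above are what make this tractable), and keep careful track of the $O(M)$ losses caused by the indivisibility of each trader's bundle.
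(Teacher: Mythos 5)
Your overall plan (split the loss into a sampling error and a pricing error, count the ``marginal'' virtual units near the prices, use the non-increasing deal values to convert a count of lost units into a fraction of OPT, then optimize the concentration parameter) is the same architecture as the paper's proof, and your telescoping identity is a genuinely nice ingredient: the number of $X$-units strictly between $p^*$ and the \emph{own} equilibrium price $p^X$ does equal $\abs{D^X(p^*)-S^X(p^*)}$, which is a sum of independent, bounded, mean-zero terms at the \emph{fixed} price $p^*$, so Hoeffding applies cleanly there. The genuine gap is in the quantities you actually need, namely $\Delta^L,\Delta^R$ (and the band between $p^*$ and the imported price): these involve a \emph{random} threshold. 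Telescoping at $p^R$ only rewrites the count of $L$-units between $p^L$ and $p^R$ as $\abs{D^L(p^R)-S^L(p^R)}$, with the random price still inside, and no clearing identity removes it; decomposing the band through $p^*$ reduces the problem to counting the $L$-units in $(p^*,p^R)$, whose $R$-portion you can control (it again telescopes to $\abs{D^R(p^*)-S^R(p^*)}$) but whose $L$-portion does not follow from a single fixed-price Hoeffding bound --- you need concentration that holds \emph{simultaneously for all candidate prices}. This is exactly the pitfall the paper flags (the set of efficient buyers priced out by $p^R$ is a random set, so the sampling lemma cannot be applied to it directly) and resolves with the one-dimensional family $B_{-,t}$: for each cardinality $t$ there is a single deterministic candidate set, so a union over $t$ gives a bound uniform in the random price. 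Your stated tools, monotonicity plus the telescoping identity, do not supply this uniformity, so the claimed bound $\Delta^X\le\Theta(M\sqrt{k\ln k})$ is unproven as written; some form of the paper's union-over-thresholds (or a DKW-type uniform bound) has to be added.

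A secondary problem is your treatment of Lottery-MUDA: the loss relative to the Vickrey top-$q$ sum is not an additive $O(M)$ indivisibility term. With excess supply the willing long-side units exceed the trade quantity by exactly $\Delta^L$ units, and the random permutation dilutes the efficient ones by roughly a factor $\frac{|S^L_*|}{|S^L_*|+|S^L_+|}$; the paper accounts for this separately (it is why its Lottery constant is $36$ versus $18$ for Vickrey). This is patchable inside your framework once $\Delta^L$ is properly bounded --- taking expectation over the permutation makes the displaced units have average surplus --- but the justification you give does not yield it. Finally, be aware that your averaging step (``$t$ lost marginal units cost at most a $\tfrac{t}{k-t}$ fraction of OPT'') tacitly assumes the lost deals are the below-average ones; the paper makes essentially the same leap, so this is comparable informality rather than an extra gap, but it deserves an explicit argument in a full write-up.
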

The proof is given in Section \ref{sec:competitive-ratio}. 

While Vickrey-MUDA attains more total-GFT than Lottery-MUDA, their asymptotic behavior is similar --- both of them approach the optimal total-GFT when the market size ($k$) is large, as long 
as the market-concentration factor (represented by $M$) is kept constant. In contrast, if $M$ is very large ($M=k$) the market effectively has a single buyer and a single seller, and the impossibility result of \citep{myerson1983efficient} implies that no positive approximation of the GFT is possible.
The arXiv version shows an example in which the \emph{agents-GFT} of Vickrey-MUDA is close to 0 (the agents-GFT of Lottery-MUDA always equals its total-GFT).

While we focus on approximating the maximum GFT (buyers' values minus trading sellers' values),
other mechanisms in the literature approximate the maximum \emph{social welfare} (buyers' values plus non-trading sellers' values). 
But \emph{any} mechanism that attains a fraction $\alpha$ of the optimal GFT also attains a fraction of at least $\alpha$ of the optimal social welfare \citep{brustle2017approximating}. Hence, MUDA is asymptotically-optimal with respect to the social-welfare too. 

The lower bound of Theorem \ref{thm:competitive}
depends on $k$ --- the optimal trade size. 
Ideally, we would like a bound that depends on the number of traders that come to the market (say, $n$). However, we cannot attain such bound theoretically in a worst-case analysis, even for the social-welfare. As an example, consider a single-good single-unit market having $n$ sellers with value $2$, $n-1$ buyers with value $1$ and one buyer with value $1000000$. Here $k=1$ as there is only one relevant trade. The competitive ratio of any mechanism depends only on the probability with which it performs this single trade; this probability does not change even when $n\to\infty$.

We complement our worst-case analysis that depends on $k$ with simulations of both variants of MUDA 
on agents drawn from both synthetic and realistic distributions. The simulations show that, when valuations are random (and not worst-case), the competitive-ratio of MUDA increases with the number of traders. These are presented in Section \ref{sec:simulations}.

\subsection{Related Work} \label{sec:related}
Most existing mechanisms for multi-unit double-auction are not truthful, e.g. \citet{plott1990multiple}.

The research on truthful double auction has made many advancements since \citet{mcafee1992dominant}. There are variants of McAfee's mechanism for  maximizing the auctioneer's surplus \citep{Deshmukh2002Truthful}, handling spatially-distributed markets  \citep{Babaioff2004Mechanisms}, transaction costs \citep{Chu2006Agent}, supply chains \citep{Babaioff2006Incentive}, constraints on the set of traders that can trade simultaneously \citep{Yao2011Efficient,dutting2017modularity,brustle2017approximating}, 
online arrival of buyers \citep{Wang2010TODA} and strong budget-balance \citep{ColiniBaldeschi2016Approximately}.
All these advancements are for single-unit agents.

Other double-auction mechanisms either assume that the agents' valuations are additive
\citep{huang2002design,xu2010secondary,feldman2016double,goel2016reservation,hirai2017polyhedral}
or assume that their valuations are represented by a single parameter 
\citet{Gonen2007Generalized,kojima2014double}.
The DMR valuations handled by MUDA are multi-parametric and include additive valuations as a special case.

Several recent double-auction mechanisms work in a Bayesian setting --- they assume that the traders' valuations are drawn from some probability distribution that is public knowledge. Such knowledge allows the mechanism designer to attain approximate efficiency without relying on the agents' reports.
Examples are 
\citet{yoon2008participatory,colini2017approximately}.
Some other mechanisms require partial prior knowledge on the valuations, such as their median \citep{Blumrosen2014Reallocation} or their maximum and minimum value \citet{gonen2017dycom}.
\citep{baliga2003market} assume that the agents' valuations are drawn from some \emph{unknown} distribution (they handle single-unit traders) .
In contrast, MUDA needs no prior information on the agents' valuations and does not even assume that they are drawn from some distribution.

We are aware of two truthful mechanisms that handle multi-parametric agents with DMR valuations in a prior-free way. The first is by
\citet{Blumrosen2014Reallocation}: their competitive ratio is 1/48 --- it is not asymptotically-efficient. The second is by  \citet{loertscher2016dominant}, which is being developed simultaneously to our work.
Their mechanism (that does not use market-halving) is asymptotically-efficient when the valuations of the traders
are drawn from probability distributions satisfying certain conditions. In contrast, our convergence theorem does not assume that agents' valuations are drawn from a distribution at all.

\section{Model}
\label{sec:notation}

\paragraph{\textbf{Agents and valuations}}\label{sub:valuations}
We consider a market for a single good. Some agents, the ``sellers'', are endowed with at most $M$ units of that good, and other agents, the ``buyers'', are endowed with an unlimited supply of money.
Each agent $j$ has a valuation-function $v_j$ that returns, for every integer $t$ between $0$ and $M$, the agent's value for owning $t$ units. The valuations are normalized such that $v_i(0)=0$. 

All agents have 
\emph{decreasing marginal returns} (DMR). Formally, 
$v_j(t+1)-v_j(t) \leq v_j(t)-v_j(t-1)$, for every agent $j$ and $t\range{1}{M-1}$.

We will often represent a multi-unit agent as $M$ single-unit \emph{virtual agents}; the value of virtual-agent $t$ of agent $j$ is the agent's marginal value for having the $t$-th unit:
$v_{j,t} := v_j(t) - v_j(t-1)$ for $t\range{1}{M}$ (a similar idea was used by \citet{chawla2010multiparameter}).

We assume that the agents' valuations are \emph{generic}, i.e, all marginal values of different traders are different and linearly-independent over the integers (no linear combination with integer coefficients equals zero). 
This assumption can be dropped if we use centralized tie-breaking; see \citet{Hsu2016Do} for other ways to handle ties in markets.


Given a price $p$ per unit of the good, the \emph{gain} of a buyer $i$ from buying $t$ units is $Gain_i(t,p)= v_i(t) - t\cdot p$, and the gain of a seller $j$ from selling $t$ units is 
$Gain_j(t,p) = t\cdot p + v_j(M_j-t) - v_j(M_j)$, where $M_j$ is the number of units initially held by seller $j$, $M_j\leq M$.

A \emph{mechanism} is a (randomized) function that takes the agents' valuations and returns (1) a trading-price $p$, (2) for each buyer(seller) $i$, the amount of units $t_i$ he should buy(sell) at price $p$, and possibly a trading-fee $f_i$ paid to the market-maker. A mechanism is \emph{materially-balanced} if the number of units bought equals the number of units sold:
$\sum_{i\in\text{Buyers}}t_i = \sum_{j\in\text{Sellers}}t_j$.
It is \emph{budget-balanced (BB)} if 
it is materially-balanced and $\sum_{i\in\text{Traders}} f_i\geq 0$. It is \emph{strongly-budget-balanced (SBB)} if  $\sum_{i\in\text{Traders}} f_i= 0$.

A mechanism is
\emph{individually-rational (IR)} if every agent $i$ has a weakly-positive gain: $Gain_i(t_i,p)-f_i\geq 0$ with probability 1.  It is \emph{DSIC} (\emph{= truthful}) 
if an agent can never increase his gain by pretending to have different valuations.

The \emph{demand} of buyer $i$ is the number of units that maximizes the gain: $Demand_i(p) = \arg\max_{t\range{0}{M}} Gain_i(t,p)$. The genericity assumption implies that the maximum is unique. 
When $i$ has DMR, $Demand_i(p)$ equals the number of virtual-buyers $i,t$ with $v_{i,t}>p$.
\footnote{
This is true only for a DMR agent. 
For example, suppose buyer $i$ values one unit as 3 and two units as 4. Then $v_{i,1}=3$ and $v_{i,2}=1$. 
If the price is 2, then there is one virtual buyer with value above the price, and indeed the agent's demand is 1. However, if the buyer values one unit as 1 and two units as 4, then $v_{i,1}=1$ and $v_{i,2}=3$. 
If the price is 2, then there is still one virtual buyer with value above the price, but the agent's demand is 0.
}

The \emph{aggregate-demand} at price $p$ is the sum of demands of all buyers. Equivalently, this is the number of virtual-buyers with $v_{j,t}> p$.  
The \emph{supply} and \emph{aggregate-supply} are defined analogously.

The \emph{total-gain-from-trade} is the sum of gains of all buyers and sellers: $total~GFT := \sum_{i\in\text{Traders}} Gain_i(t_i)$. 
Note that in a materially-balanced mechanism, the total-GFT does not depend on the trading-price.
The \emph{agents-GFT} is the total GFT minus the total fees, $agents~GFT := total~GFT -  \sum_{i\in\text{Traders}} f_i$. 

\section{Mechanism Details}
\label{sec:MUDA-details}
This section explains the details of the MUDA mechanism presented in the introduction. 
The steps are done in each sub-market separately. For convenience we describe step 1 in the right market and step 2 in the left market; the execution in the opposite direction is entirely analogous.

\paragraph{Step 1: Price calculation.}
We calculate a price that is an \emph{equilibrium price} at the right market --- a price $p^R$ for which
$
Demand^R(p^R) = Supply^R(p^R) 
$.
Such a price exists even in more general (multi-good) settings. It can be found, for example, by simulating an English auction \citep{gul2000english}, or by binary search.

\paragraph{Step 2: Posted-price trade.}
For each buyer $i$ in the left market, calculate $Demand_i(p^R)$. 
For each seller $j$ in the left market, calculate 
$Supply_j(p^R)$. 
Let $Demand^L$ be the sum of demands and $Supply^L$ the sum of supplies. If $Demand^L=Supply^L$, then we can let the traders trade freely at price $p^R$ and the market will clear. 
Usually, however, we will not be so lucky: there will be either excess demand ($Demand^L>Supply^L$) or excess supply ($Demand^L<Supply^L$). These two cases are handled analogously; henceforth we describe how to handle excess supply. First, we ask each buyer to pay in advance for the optimal number of units he wants to buy at price $p^R$, so we have money for $Demand^L$ units. Since $Demand^L<Supply^L$, we will have more than enough units to give to all these buyers for their money. The problem is how to select the sellers that will supply these $Demand^L$ units. We present two solutions.

(a) Lottery: Order the sellers randomly; let each seller in turn sell at price $p^R$ as many units as he likes, while there is money (i.e, while at most $Demand^L$ units are sold). 

(b) Vickrey-style auction: Order the \emph{virtual} sellers in increasing order of their value. From the $Supply^L$ virtual sellers whose value is below $p^R$, pick the $Demand^L$ virtual-sellers with the lowest values, and have each of them sell an item at price $p^R$.

The Vickrey-style auction is followed by a 4th step: \textbf{trading-fees}. For each seller $j$, let $k_j$ be the number of virtual-sellers of this seller that are picked. Note that $k_j \leq Supply_j(p^R)$.  The fee paid by seller $j$ is determined by the potential gains of the virtual-sellers that are ``pushed'' out of the market because of seller $j$. Specifically, consider the set of virtual-sellers who want to trade in the left-market, \emph{except} the $k_j$ virtual-sellers of $j$. From this set, pick the (at most) $Demand^L$ low-value ones. In this set, there are (at most) $k_j$ virtual-sellers that do not trade when seller $j$ is present. Seller $j$
pays to the market-maker, the gain (price minus value) of these virtual sellers. Note that the trading-fee is positive if $k_j > 0$ and zero if $k_j=0$.

\begin{example}
\label{exm:vickrey-MUDA}
Suppose $p^R=50$. The virtual-buyers in the left market have values $100,90,80,60,40,20$, so the aggregate demand is 4. There are two sellers: Alice's values are $10,20,40,60,70$ and Bob's values are $15,25,35,45,65$, so the aggregate supply is 7 and there is excess supply. The four high-value virtual-buyers (with values 100,90,80,60) each pays 50 and is guaranteed a unit.

In Lottery-MUDA, the two sellers are ordered randomly; if Alice is first then she sells 3 units and gains $40+30+10=80$, and Bob sells 1 unit and gains $35$, so the GFT is 115. If Bob is first then he sells 4 units and gains $35+25+15+5=80$, and Alice sells nothing. The price is 50 per unit, all money collected from the buyers is given to the sellers, and no money is given to the market-maker.

In Vickrey-MUDA, the four low-value virtual-sellers are picked: they are the virtual-sellers with values 10,15,20,25. So each each real seller sells two units for 50 per unit. At this point Alice's gain is $(50-10)+(50-20)=70$ and Bob's gain is $(50-15)+(50-25)=60$, so the total-GFT is 130.
Now, Alice pays a fee of 20, since her presence prevents Bob from selling two units worth for him 35 and 45, so her net gain is 50. Bob pays 10, since his presence prevents Alice from selling a unit worth for her 40 (the unit worth 60 would not have been sold anyway), so his net gain is 50 too, and the agents-GFT is 100. The market-maker gains 30.
\qed
\end{example}

\section{Strategic Properties of MUDA}
\label{sec:MUDA-properties}
In both variants of MUDA, the traders cannot affect their trading-price.
Moreover, in both variants, the traders in the short side of each sub-market trade the amount of units that maximizes their gain given the price, so for them the mechanism is obviously IR and DSIC. As for the long-side traders:

(a) in Lottery-MUDA they play random serial dictatorship: the first agents in the line trade their optimal quantity given the price and the last agents in the line cannot trade at all. There is at most a single agent, in the middle of the line, who trades less than his optimal quantity. Because of the DMR assumption, it is always optimal to trade as many units as possible up to the optimal quantity (since the highest gain comes from trading the first units).
Therefore, the mechanism is IR and DSIC for all traders.

(b) In Vickrey-MUDA, long-side virtual-traders trade only if they have positive gain. In this case, a trader with $k_j$ participating virtual-traders pays a fee that is equal to the gain of 
at most $k_j$ non-participating virtual-traders. Since the
mechanism always selects the virtual-traders with the highest gain, the total fee paid by any trader is lower than his gain, so the net gain remains positive and the mechanism is IR. The mechanism is DSIC since it is effectively a multi-unit Vickrey-auction with a reserve-price. It is known that such a mechanism is DSIC; we omit the proof.

\section{Competitive-Ratio Analysis}
\label{sec:competitive-ratio}
In this section we prove Theorem \ref{thm:competitive}.
In fact, we prove a more general 
claim that depends on a third parameter, $m$ --- the minimum number of units offered(demanded) by a single seller(buyer) at same value. 
We assume that the virtual-traders of each trader $i$ are divided to groups of size at least $m$, such that the values in each group are the same.
So each agent $i$ values $m_{i,1}$ units as $v_{i,1}$, $m_{i,2}$ units as $v_{i,2}$, etc., with
$m_{i,l} \geq m$ for all $l$ and $\sum_{l} m_{i,l} \leq M$.
Theorem \ref{thm:competitive} follows from the expression at the end of this section by setting $m=1$.

We first analyze the optimal trade, then the right sub-market and finally the left sub-market.

\subsection{Optimal trade}
In the optimal trade, there is a set $B_*$ of virtual-buyers
who buy goods from a set $S_*$ of virtual-sellers.
By material-balance the numbers of virtual-agents in 
both groups are the same; this is the number we denoted by $k$:
\begin{align}
\label{eq:clearing-opt}
|B_*| = |S_*| = k
\end{align}
We call these $k$ buyers and sellers the \emph{efficient traders}. We make the pessimistic assumption that all GFT in the sub-markets comes from these efficient traders. Therefore, the GFT of our mechanism depends on the numbers of efficient traders that trade in each sub-market.

The reduction in GFT has two reasons: one is the \emph{sampling error} --- efficient buyers and sellers land in different sub-markets, so they do not meet and cannot trade. This error is easy to bound using standard tail bounds. The second reason is the \emph{pricing error} --- the price at the sub-market might be too high or too low, which might create imbalance in the demand and supply. Analyzing this error requires careful analysis of the equilibrium in the optimal situation vs. the equilibrium in each sub-market.

\subsection{Right sub-market}
In the right sub-market, MUDA calculates an equilibrium price $p^R$. We define four sets of virtual-traders:
\begin{itemize}
\item $B_-$ is the set of efficient virtual-buyers (members of $B_*$) whose value is below $p^R$ (so they won't buy at price $p^R$).
\item $S_-$ is the set of efficient virtual-sellers (members of $S_*$) whose value is above $p^R$ (so they won't sell at price $p^R$).

\item $B_+$ is the set of inefficient virtual-buyers whose value is above $p^R$ (so they want to buy at price $p^R$).
\item $S_+$ is the set of inefficient virtual-sellers whose value is below $p^R$ (so they want to sell at price $p^R$).
\end{itemize}
These sets represent the pricing error, so we want to  upper-bound their sizes.

For any set $T$ of agents, denote by $T^R$ the subset of $T$ that is sampled to the right market and by $T^L$ the subset of $T$ sampled to the left market. By definition of the equilibrium price $p^R$:
\begin{align}
\label{eq:clearing-right}
|B^R_*|-|B^R_-|+|B^R_+|
=
|S^R_*|-|S^R_-|+|S^R_+|
\end{align}
In order to relate (\ref{eq:clearing-opt}) and (\ref{eq:clearing-right}), we have to relate $B^R_*,S^R_*$ to $B_*,S_*$. This is be done using the following lemma.
\begin{lemma*}
\label{lem:det}
For every set $T$ of virtual-traders and for every integer $q>0$:
\begin{align}
\label{eq:sampling-det}
\text{w.p.~~} 1-2 \e{\frac{-2 m q^2}{M^2 |T|}}:
&&
\abs{|T^R|-{\frac{|T|}{2}}} < q
\end{align}
(``w.p. $x$'' is a shorthand to ``with probability at least $x$'').
\end{lemma*}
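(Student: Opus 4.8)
The plan is to write $|T^R|$ as a sum of independent bounded random variables --- one per \emph{real} trader, not one per virtual-trader --- and then apply Hoeffding's inequality. Index the real traders by $i=1,\dots,n$, and for each let $Z_i\in\{0,1\}$ indicate that trader $i$ is sampled to the right sub-market; the $Z_i$ are i.i.d.\ uniform on $\{0,1\}$. Let $c_i$ be the number of virtual-traders of trader $i$ that belong to $T$. Every trader owns at most $M$ virtual-traders (a seller holds at most $M$ units, a buyer demands at most $M$ units), so $0\le c_i\le M$, while $\sum_i c_i=|T|$. Since a virtual-trader follows its real owner, $|T^R|=\sum_i c_i Z_i$ is a sum of independent variables whose $i$-th term takes values in $\{0,c_i\}$, and $\mathbb{E}\,|T^R|=\tfrac12\sum_i c_i=|T|/2$, matching the centering in~(\ref{eq:sampling-det}).

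Next I would apply the two-sided Hoeffding inequality to $|T^R|=\sum_i c_i Z_i$. Since the $i$-th summand ranges over an interval of length $c_i$,
\[
\Pr\!\left[\,\abs{|T^R|-\tfrac{|T|}{2}}\ \ge\ q\,\right]\ \le\ 2\,\e{\frac{-2q^2}{\sum_i c_i^2}} .
\]
It then remains to bound $\sum_i c_i^2$. Using $c_i\le M$ for every $i$ gives $\sum_i c_i^2\le M\sum_i c_i=M\,|T|$; and since each value-group has size at least $m$ yet the groups of a single trader pack into at most $M$ units, we have $m\le M$, hence $M\,|T|\le M^2|T|/m$. Substituting into the Hoeffding bound, $\Pr[\abs{|T^R|-|T|/2}\ge q]\le 2\,\e{\frac{-2mq^2}{M^2|T|}}$, and passing to the complementary event yields exactly~(\ref{eq:sampling-det}).

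I do not expect a genuine obstacle. The only point needing care is the very first step: the coin is tossed per real trader, so $|T^R|$ is a sum of $c_i$-weighted coins rather than of $|T|$ unit coins, and the relevant variance proxy is therefore $\sum_i c_i^2\le M^2|T|/m$ rather than the $|T|$ one would get under independent virtual-trader sampling --- this is exactly what produces the factor $M^2/m$ in the exponent. (Equivalently, for the structured sets $B_*,S_*,B_\pm,S_\pm$ appearing in the analysis the same bound follows from $\sum_i c_i^2\le M^2\cdot\#\{i:c_i>0\}\le M^2|T|/m$, since for those sets each contributing trader contributes a union of equal-value groups of size at least $m$.) Everything else is the textbook two-sided Hoeffding tail bound.
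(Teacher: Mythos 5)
Your proof is correct and takes the same route the paper indicates: the paper omits the proof, stating only that it follows from a standard application of Hoeffding's inequality to the per-real-trader sampling, which is exactly your decomposition $|T^R|=\sum_i c_i Z_i$ with $c_i\in[0,M]$. Note that your intermediate bound $\sum_i c_i^2\le M|T|$ is in fact slightly stronger than the $M^2|T|/m$ needed, and relaxing it via $m\le M$ (or via the group-structure count in your parenthetical remark) correctly recovers the exponent in the lemma.
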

The lemma is proved using Hoeffding's inequality. The proof is standard and we omit it.

We apply this lemma twice, to $B_*$ and $S_*$, and combine the outcomes using the union bound. This gives, $\forall q>0$:
\begin{align}
\label{eq:sampling-*}
\text{w.p.} 1-4 e^{{-2 m q^2\over M^2 k}}:
&&
\abs{|B_*^R|-{k\over 2}} < q
\text{~and~}
\abs{|S_*^R|-{k\over 2}} < q
\end{align}
Combining equations \ref{eq:clearing-opt},\ref{eq:clearing-right},\ref{eq:sampling-*} gives, $\forall q>0$:
\begin{align}
\label{eq:sampling-R}
\text{w.p.} 1-4 e^{{-2 m q^2\over M^2 k}}:
&&
\abs{
|B_-^R\cup S_+^R|
-
|B_+^R \cup S_-^R|
} < 2 q
\end{align}
Of the two sets in the left-hand side, at least one must be empty: if $p^R$ is too high (relative to some optimal equilibrium price $p^O$) then efficient buyers quit and inefficient sellers join, but no inefficient buyers join and no efficient sellers quit, so $B_+^R = S_-^R=\emptyset$. This situation is illustrated in Figure \ref{fig:1d}. Analogously, if $p^R$ is too low then $B_-^R= S_+^R=\emptyset$.

\begin{figure}
\begin{center}
	\includegraphics[width=4cm]{./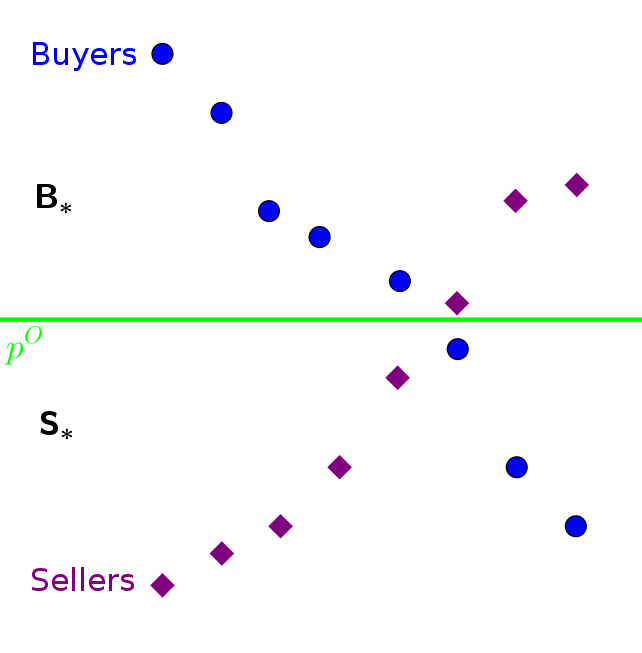}
	\hskip 2mm	\includegraphics[width=4cm]{./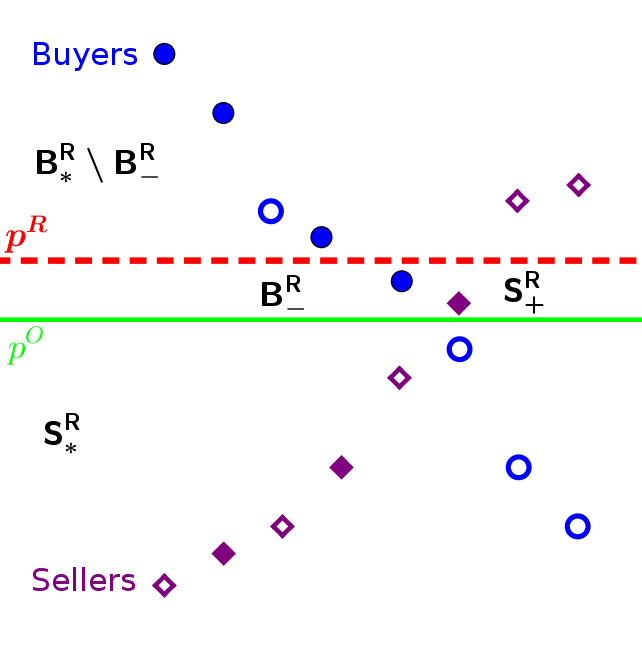}
\end{center}
\caption{
\label{fig:1d}
Price-distortion due to sampling. Each ball is the value of a single-unit buyer and each square is the value of a single-unit seller.
\textbf{Left:} the optimal situation where  there are 5 deals: the buyers above the line $p^O$ trade with the sellers below the line.
\textbf{Right:} the right market: full markers represent the traders sampled to the right and empty markers represent the traders sampled to the left. The equilibrium price $p^R$ is higher than $p^O$, so some efficient buyers quit ($B_-$) and some inefficient sellers join ($S_+$).
}
\end{figure}

From now on we assume that the situation is like in Figure \ref{fig:1d} (the other situation is analogous). So (\ref{eq:sampling-R}) implies:
\begin{align}
\label{eq:b-r}
\forall q:~~
\text{w.p.} 1-4 e^{{-2 m q^2\over M^2 k}}:~~
|B_-^R| < 2 q \text{~and~} |S_+^R|  < 2 q
\end{align}
Our goal is now to derive an upper bound on $B_-$ and $S_+$. They are entirely analogous; we focus on $B_-$. Note that we cannot apply (\ref{eq:sampling-det}) directly to $B_-$, since $B_-$ is a random-set --- it depends on the random-sampling through $p^R$. 
(\ref{eq:sampling-det}) does not apply to sets $T$ that depend on the random-sampling; as an illustration, suppose the set $T$ is selected such that it contains only virtual-agents from the right market. Then $T^R=T$ so (\ref{eq:sampling-det})  is obviously not satisfied. Fortunately, $B_-$ is a special random-set --- it is \emph{one-dimensional}: for every integer $t>0$, it has only  a single possible value with cardinality $t$, which is the set of $t$ virtual-buyers with the lowest value in $B_*$ (see Figure \ref{fig:1d}, where $t=1$). Denote these sets by $B_{-,t}$. For every $t$, $B_{-,t}$ is independent of the random-sampling, so it is eligible to (\ref{eq:sampling-det}). Substituting there  $q\to 2 t-2 q$ gives:
\begin{align}
\label{eq:b-rt}
\forall q,t:
\text{w.p.} 1-2 e^{{-2 m (2t - 2q)^2\over M^2\cdot 4 t}}:
&&
\abs{|B_{-,4 t}^R|-{2 t}} < 2 t - 2 q
\\
\notag
\implies &&
|B_{-,4 t}^R| >  2 q
\end{align}
If $|B_-^R|< 2 q$ and $|B_{-,4t}^R| >  2 q$, then necessarily $|B_-| < 4 t$. So by combining (\ref{eq:b-r}) and (\ref{eq:b-rt}) with the union bound we get, $\forall q,t$:
\begin{align}
\label{eq:b-qt}
\text{w.p.} 1
-4 e^{{-2 m (t - q)^2\over M^2 t}}
-4 e^{{-2 m q^2\over M^2 k}}
:
|B_-| < 4 t \text{~\&~} |S_+|  < 4 t
\end{align}
To simplify the expression we choose $t=2 q$; assuming $ 2 q<k$, this implies ${-2 m (t - q)^2\over M^2 t} < {-2 m q^2\over M^2 k}$, so (\ref{eq:b-qt}) simplifies to:
\begin{align}
\label{eq:b-q}
\forall q<{k/2}:~
\text{w.p.} 1
-8 e^{{-2 m q^2\over M^2 k}}:
~
|B_-| < 8 q \text{~\&~} |S_+|  < 8 q
\end{align}

\subsection{Left sub-market}
Denote by $B^L$($S^L$) the set of efficient buyers(sellers) who want to buy(sell) in the left sub-market at price $p^R$:
\begin{align*}
B^L &:= B_*^L \setminus B_-^L = B_* \setminus B_*^R \setminus B_-^L
\\
S^L &:= S_*^L \setminus S_-^L = S_*^L = S_* \setminus S_*^R 
\end{align*}
(Recall that we assume the case in which $S_-$ and $B_+$ are empty; the other case is analogous).

Using (\ref{eq:sampling-*}) and (\ref{eq:b-q}) with the same $q$ gives, for every $q<{k\over 2}$: 
\begin{align*}
\text{w.p. $1-8 e^{{-2 m q^2\over M^2 k}}$:}~~
|B^L| > {k\over 2}  - 9 q
\text{~~and~~}
|S^L| > {k\over 2}  - q
\end{align*}
In Vickrey-MUDA, the most efficient traders in each side trade with each other. Therefore, the mechanism makes at least the ${k\over 2}-9 q$ most efficient deals in the left submarket. Similar considerations are true in the right submarket. All in all, Vickrey-MUDA does at least the $k-18 q$ most efficient out of the $k$ efficient deals. Therefore, w.p. $1-8 e^{{-2 m q^2\over M^2 k}}$, the competitive ratio is at least $1 - 18 {q \over  k}$.

In Lottery-MUDA, the efficient sellers in $S_*^L$ have to compete with the inefficient sellers in $S_+^L$ in the random lottery. The expected number of efficient deals carried out is thus:
\begin{align*}
{|B^L|\over |S_+^L|+|S_*^L|} \geq {k/2 - 9 q \over k/2 + 9 q}  \geq {k\over 2} - 18 q
\end{align*}
Therefore, w.p. $1-8 e^{{-2 m q^2\over M^2 k}}$, 
the \emph{expected} competitive ratio is at least $1 - 36 {q \over  k}$. So w.p. 1, the expected competitive ratio is at least 
$1 -8 e^{{-2 m q^2\over M^2 k}} - 36 {q \over  k}$.

All our analysis so far holds simultaneously for every $q<{k\over 2}$. Now, we select $q$ to maximize the expected competitive ratio. With some tedious calculations we find that, when $k$ is sufficiently large, we can select $q$ such that the competitive ratio is at least:
\begin{tedious-calculations}
\footnote{
Define $x = {q\over k}$.
We look for an $x$ that minimizes $8 e^{{-2 m x^2 k\over M^2}} + 36 x$. The first-order condition is:
\begin{align*}
{36 M^2 \over 32 m k} = x\cdot  \e{- 2 m k x^2 \over M^2}
\end{align*}
The solution is:
\begin{align*}
x_{\min} = \sqrt{{-M^2\over 4 m k} \cdot W\left({- 36^2 M^2 \over 16^2 m k}\right)}
\end{align*}
When $k$ is sufficiently large, the argument of $W$ is small and it is approximated by:
\begin{align*}
W(z) \approx - \ln({1\over -x}) - \ln(\ln({1\over -x})) \approx 
-\ln({1\over -x})
\end{align*}
so:
\begin{align*}
x_{\min} \approx \sqrt{{M^2\over 4 m k} \cdot \ln \left({16^2 m k\over 36^2 M^2}\right)}
\end{align*}
Substituting $q=k x_{\min}$ gives:
\begin{align*}
36 {q\over k} = {36 M\over 2\sqrt{m k}} \sqrt{\ln \left({16^2 m k\over 36^2 M^2}\right)}
\\
8 e^{{-2 m q^2\over M^2 k}} = 
8 \left({16^2 m k\over 36^2 M^2}\right)^{-1/2}
=
{36 M\over 2 \sqrt{m k}}
\end{align*}
}
\end{tedious-calculations}
\begin{align*}
&1 - {36 M\over 2\sqrt {m k}}\left(1 + \sqrt{\ln\left(16^2 m k\over 36^2 M^2\right)}\right)
\\
=&
1 - O\left(M \sqrt{{\ln {m k} \over m k}}\right)
\end{align*}
The analysis for Vickrey-MUDA is obtained by replacing 36 with 18; the asymptotic behavior is the same.

\section{Simulations}
\label{sec:simulations}
To complement our theoretic analysis, we simulated MUDA on traders with valuations sampled both from a synthetic distribution and an empirical distribution based on real stock-exchange data.



\subsection{Uniform distribution}
In the first experiment, 
for each agent, we sampled 
$M/m$ values from a uniform distribution with support $[V-A, V+A]$.
We considered each of these values as the marginal-value of $m$ virtual-traders, so that each agent has $M$ virtual-traders. We ordered the values in decreasing order to get DMR valuations.
In the experiments, we took $m=100$ and $V=500$ and varied the noise-amplitude $A$ between $50$ and $450$. Here we show the results for $A=250$; varying $A$ did not have much effect on the results. We repeated each experiment 100 times and averaged the results.

In the first sub-experiment, we kept $M$ constant (at $100,000$) and varied the number of real traders between 0 and 1000.
The results are shown in Figure \ref{fig:results-uniform}/Left. 
The competitive ratios of all variants of MUDA
increase towards $1$ when the number of traders grows.

In the second sub-experiment, we kept constant the total number of units held by all sellers.  We increased $M$ from 100 to $10^8$, and decreased the number of traders accordingly so that the total number of units remains constant. The results are shown in Figure \ref{fig:results-uniform}/Right. The competitive ratios decrease when $M$ increases and the same number of units are concentrated in the hand of fewer traders. When $M$ is very large we effectively have one buyer and one seller, and then the impossibility result of \citep{myerson1983efficient} implies that no positive competitive ratio is possible.

\begin{figure}
\begin{center}
\includegraphics[width=\columnwidth]{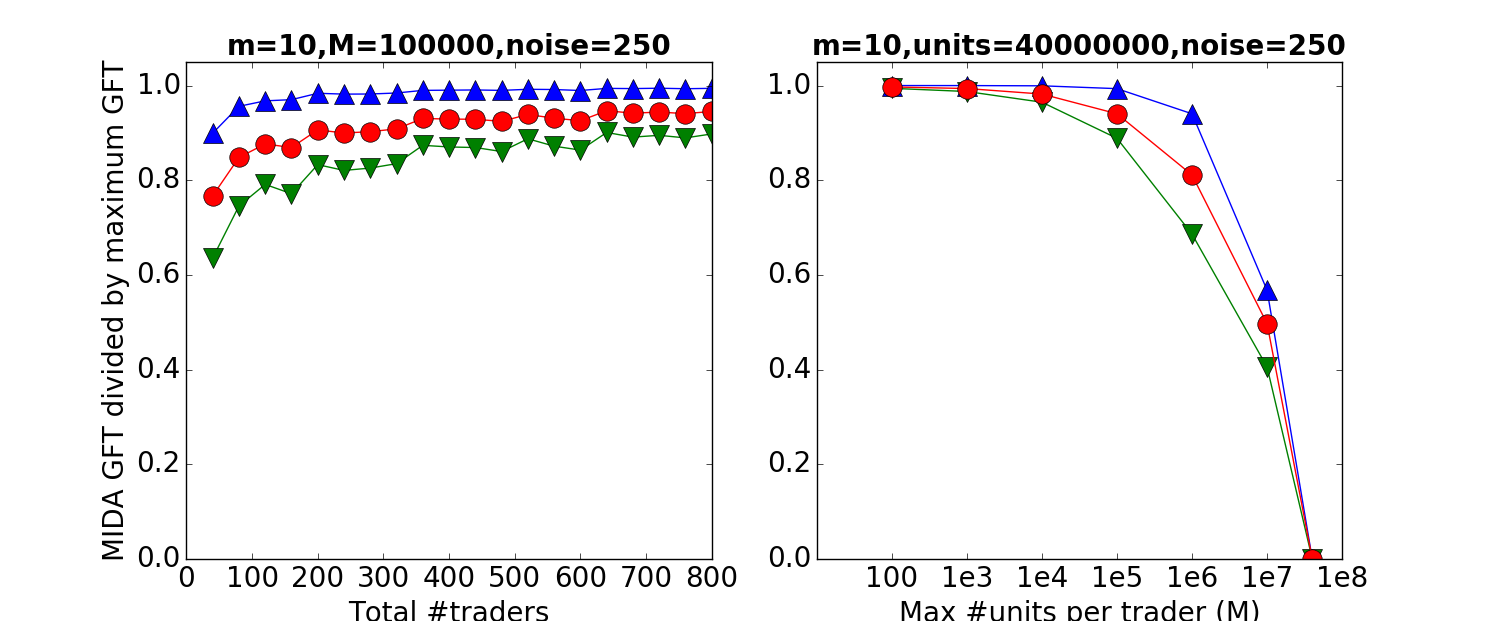}
\end{center}
\caption{
\label{fig:results-uniform}
Competitive ratio of MUDA when traders' valuations are drawn from a uniform distribution. Legend:\\
* Downward-wedges = agents-GFT of Vickrey-MUDA;\\
* Discs = Lottery-MUDA;\\
* Upward-wedges = total-GFT of Vickrey-MUDA.
\\
\textbf{Left}: maximum number of units per trader ($M$) is fixed, and number of traders (and total number of units) increases.\\
* Agents-GFT of Vickrey-MUDA ranges from 0.58 to 0.86.  \\
* GFT of Lottery-MUDA ranges from 0.74 to 0.92. \\
* Total-GFT of Vickrey-MUDA ranges from 0.897 to 0.991. 
\\
\textbf{Right}: total number of units is fixed, and maximum number of units per trader increases (so the total number of traders decreases).
}
\end{figure}

Both plots show that the GFT of Lottery-MUDA is approximately middle-way between the total-GFT and the agents-GFT of Vickrey-MUDA. 
This fact has a practical implication regarding the choice of mechanism: if the market is managed by a monopolist (e.g. the government), then Vickrey-MUDA is better since its total-GFT is higher; but if the market competes with other markets (e.g. stock trading platforms), then Lottery-MUDA is better since its agents-GFT is higher.

We found that, when the number of units per trader is fixed, the performance is determined by the number of traders in the market. Therefore, the plot at the left of Figure \ref{fig:results-uniform} remains the same even when we set 
e.g. $m=1$ and $M=10$. As for the plot at the right of Figure \ref{fig:results-uniform}, if we set the total number of units to e.g. $1000$, then the plot approaches zero already when $M=1e3$, since in this case there is a single buyer and a single seller.

\subsection{Empirical stock-exchange distribution}

\begin{figure}
\begin{center}
\includegraphics[width=\columnwidth]{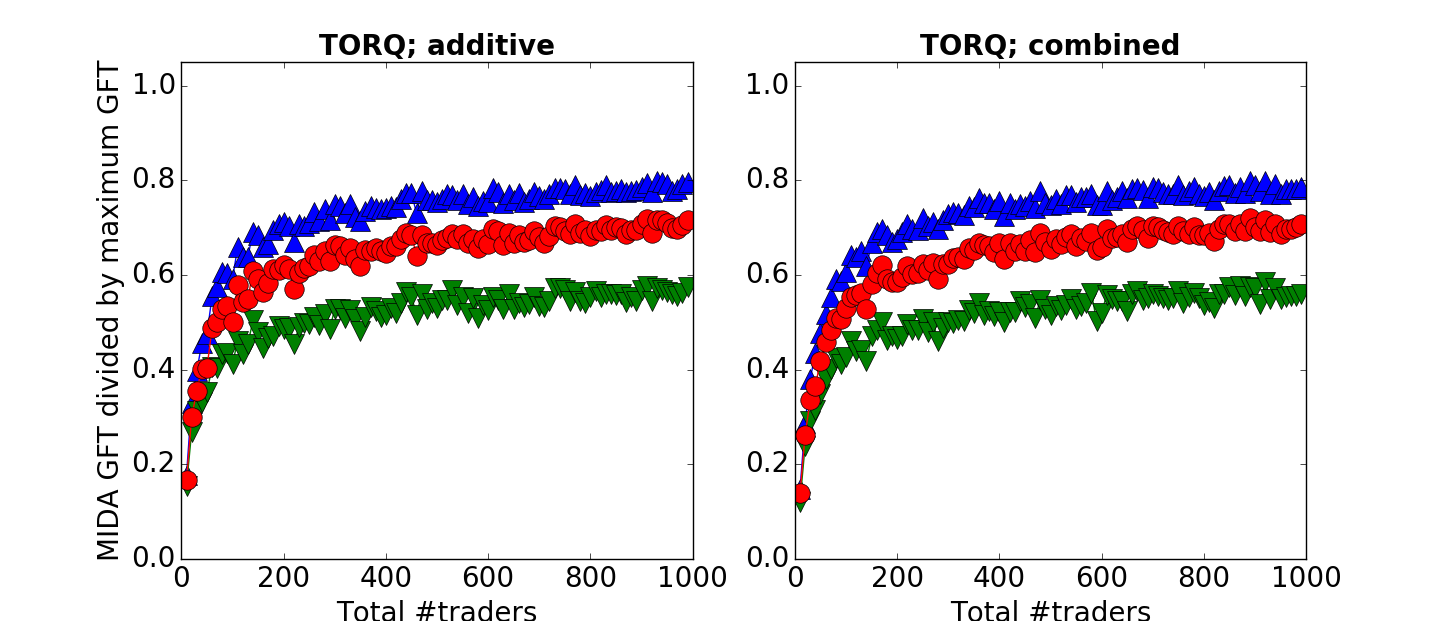}
\end{center}
\caption{
\label{fig:results-torq}
Competitive ratio of MUDA when traders' valuations are drawn from an empirical distribution based on NYSE orders.
Discs = Lottery-MUDA;
upward/downward-wedges = total/agents-GFT of Vickrey-MUDA.
At the left, each order is assumed to belong to a different trader (so all traders are additive).
At the right, orders from the same order-date are merged to a single trader.
}
\end{figure}
In the second experiment, we used the TORQ database
\citep{hasbrouck1992using,lee2000inferring}. It contains buy and sell orders given for a sample of 144 NYSE stocks for the three months 1990-11 through 1991-01. In the NYSE, each day before the continuous trade begins, there is a phase of ``start-of-day auctions''. For each stock, a separate multi-unit double-auction is conducted in the following way. All buy and sell orders given for that stock before the start of day are collected. An equilibrium price is calculated. All buy-orders above the price and all sell-orders below the price are executed. As explained in the introduction, this mechanism is efficient but not truthful. Therefore, the orders may not represent the true values of the traders.
While there are econometric methods for estimating the true values from the reported values, these are beyond the scope of the present paper, since we do not need to know the true value of every trader --- all we need is a distribution of values. 
Our results are valid as long as 
the empirical distribution of reported values resembles 
the distribution of values in the real world.

In the present paper we assume that the empirical distribution of the orders is similar to the empirical distribution of the true values.

For the experiment, we considered only the start-of-day orders. These orders are given in the format (Symbol, Date, Order date, Side, Price, Quantity) where 
\begin{itemize}
\item 
Symbol is the three-letter acronym of the stock;
\item
Date is the day of the auction.
\item
Order date is the day the order was made, which may be earlier than Date;
\item
Side is BUY or SELL; and 
\item
Price and Quantity are the actual bid.
\end{itemize}

The dataset contains 144 symbols, 7914 different symbol-date combinations and $\approx 210000$ orders.

The dataset does not contain the identities of the traders.
We made two experiments: in the first we treated each order as a separate trader, so that all traders are additive (each trader values a certain quantity of units for the same price-per-unit).  In the second experiment, we simulated traders with non-additive valuations by merging bids based on the order-date. I.e, we  heuristically assumed that two bids with the same symbol, date, order-date and side belong to the same trader. We ordered the bids of the same trader in descending order to create DMR valuations, similarly to the uniform experiment.
In average, each merged bidder contained 10 different orders.
The number of units per order ranges between 100 and 99000, so $m=100$. The number of units per trader, after combining orders from the same order-date, ranges between 5000 and 12000000 with median 148000, so $M=12000000$.

For each symbol, we created a collection of all traders in all days to represent the empirical distribution of this symbol.
Then, in each experiment we sampled $n$ traders, where $n = 10,20,\ldots,990$. All in all we simulated $99\cdot 144$ auctions and averaged the $144$ auctions for each $n$.
The results are shown in Figure \ref{fig:results-torq}.

\subsection{Discussion}
It is interesting that the plots in Figure \ref{fig:results-torq} look almost the same, i.e, the non-additivity had little effect on the results. The reason might be that stock-traders have many different bids with very similar prices, so that they are almost additive. 

The competitive ratio in the TORQ experiment is somewhat lower than in the uniform case. We attribute this to the large variation in the number of units: while some traders hold only 5000 units, others hold as many as 12 million.

The source code used for the experiments is available at the following URL:

\url{https://github.com/erelsgl/economics}

\section{Future Work}
\label{sec:future}
It is interesting whether MUDA can be extended to agents with general valuations (not DMR). This poses two challenges. First, in step 1, a price-equilibrium might not exist (the existence of a price-equilibrium is guaranteed only when the agents have DMR). Second, in step 2, material balance might fail. For example, if there is one buyer who only wants to buy 2 units and one seller who only wants to sell 3 units, then with DMR we can assume that both agree to trade 2 units, but without DMR this assumption might be false --- the seller might have negative utility from selling 3 units.

Another challenge is handling double auctions with multiple types of items. Preliminary results in this direction were presented by \citet{segal2016mida}.

We will also be happy to compare MUDA to other double-auction mechanisms. In particular, an  interesting practical question is 
when would MUDA's truthfulness actually matter --- how does its GFT compare to the GFT in Nash equilibrium of the standard (non-truthful) market mechanism? As far as we know, the Nash-equilibrium GFT of the non-truthful market mechanism (i.e, its ``price-of-anarchy'') is still an open question.    The closest reference we found is by
\citet{babaioff2014efficiency}, who consider the price-of-anarchy of the market mechanism in a single-sided market.  We are not aware of analogous results for double-sided markets.

But non-truthful market mechanisms have disadvantages beyond the price-of-anarchy.  
First, the players need to spend resources to obtain information about other players and calculate their best response. 
Second, the players may not even play an equilibrium. It is possible that players do not know the correct state of the market and hence play best-response to a fictitious state. This can bring social welfare to zero. Consider any mechanism that sets a single price clearing the market. For every active buyer(seller), it is a best response to bid below(above) the market price in order to push the price down(up).  If all agents play these best responses, the outcome is that nothing gets sold. Such market failures are prevented by the truthfulness of MUDA.

\section{Acknowledgments}
Assaf Romm helped us access and analyze the TORQ dataset and provided many helpful comments.
Ron Adin, Simcha Haber, Ron Peretz, Tom van der Zanden, Jack D'Aurizio, Andre Nicolas,  Brian Tung, Robert Israel, Clement C. and C. Rose helped us with the probability calculations.
The participants of the game-theory seminar in Bar-Ilan University, computational economics and economic theory seminars in the Hebrew university of Jerusalem, algorithms seminar in Tel-Aviv university and computer science seminar in Ariel university provided helpful comments on various aspects of the paper.
The anonymous reviewers in Theoretical Economics, SODA 2016, EC 2016, SODA 2017, EC 2017, and, of course, AAAI 2018, gave us detailed and helpful feedback.

Erel was supported by ISF grant 1083/13, Doctoral Fellowships of Excellence Program and Mordecai and Monique Katz Graduate Fellowship Program at Bar-Ilan University. 
Avinatan Hassidim is supported by ISF grant 1394/16.

\fontsize{9.0pt}{10.0pt} \selectfont
\bibliographystyle{apalike}
\bibliography{erel}

\begin{thebibliography}{}

\bibitem[Babaioff et~al., 2014]{babaioff2014efficiency}
Babaioff, M., Lucier, B., Nisan, N., and Paes~Leme, R. (2014).
\newblock On the efficiency of the walrasian mechanism.
\newblock In {\em Proc. EC '14}, pages 783--800.

\bibitem[Babaioff et~al., 2004]{Babaioff2004Mechanisms}
Babaioff, M., Nisan, N., and Pavlov, E. (2004).
\newblock {Mechanisms for a Spatially Distributed Market}.
\newblock In {\em Proc. EC}, pages 9--20.

\bibitem[Babaioff and Walsh, 2006]{Babaioff2006Incentive}
Babaioff, M. and Walsh, W. (2006).
\newblock {Incentive Compatible Supply Chain Auctions}.
\newblock In {\em Multiagent based Supply Chain Management}, volume~28, pages
  315--350. Springer.

\bibitem[Balcan et~al., 2008]{balcan2008reducing}
Balcan, M.-F., Blum, A., Hartline, J.~D., and Mansour, Y. (2008).
\newblock {Reducing mechanism design to algorithm design via machine learning}.
\newblock {\em J Comput. Sys. Sci.}, 74(8):1245--1270.

\bibitem[Balcan et~al., 2007]{balcan2007random}
Balcan, M.-F., Devanur, N., Hartline, J.~D., and Talwar, K. (2007).
\newblock {Random Sampling Auctions for Limited Supply}.
\newblock Technical report, Carnegie Mellon University.

\bibitem[Baliga and Vohra, 2003]{baliga2003market}
Baliga, S. and Vohra, R. (2003).
\newblock {Market Research and Market Design}.
\newblock {\em Advances in Theoretical Economics}, 3(1).

\bibitem[Blumrosen and Dobzinski, 2014]{Blumrosen2014Reallocation}
Blumrosen, L. and Dobzinski, S. (2014).
\newblock {Reallocationx Mechanisms}.
\newblock In {\em Proc. EC}, page 617. ACM.

\bibitem[Brustle et~al., 2017]{brustle2017approximating}
Brustle, J., Cai, Y., Wu, F., and Zhao, M. (2017).
\newblock Approximating gains from trade in two-sided markets via simple
  mechanisms.
\newblock In {\em Proc. EC '17}.

\bibitem[Chawla et~al., 2010]{chawla2010multiparameter}
Chawla, S., Hartline, J.~D., Malec, D.~L., and Sivan, B. (2010).
\newblock {Multi-parameter Mechanism Design and Sequential Posted Pricing}.
\newblock In {\em Proc. STOC}, pages 311--320.

\bibitem[Chu and Shen, 2006]{Chu2006Agent}
Chu, L.~Y. and Shen, Z.-J.~M. (2006).
\newblock {Agent Competition Double-Auction Mechanism}.
\newblock {\em Management Science}, 52(8):1215--1222.

\bibitem[Colini-Baldeschi et~al., 2016]{ColiniBaldeschi2016Approximately}
Colini-Baldeschi, R., de~Keijzer, B., Leonardi, S., and Turchetta, S. (2016).
\newblock {Approximately Efficient Double Auctions with Strong Budget Balance}.
\newblock In {\em Proc. SODA '16}.

\bibitem[Colini-Baldeschi et~al., 2017]{colini2017approximately}
Colini-Baldeschi, R., Goldberg, P.~W., de~Keijzer, B., Leonardi, S.,
  Roughgarden, T., and Turchetta, S. (2017).
\newblock Approximately efficient two-sided combinatorial auctions.
\newblock In {\em Proc. EC '17}, pages 591--608.

\bibitem[Deshmukh et~al., 2002]{Deshmukh2002Truthful}
Deshmukh, K., Goldberg, A.~V., Hartline, J.~D., and Karlin, A.~R. (2002).
\newblock {Truthful and Competitive Double Auctions}.
\newblock In {\em Proc. ESA}, volume 2461 of {\em Lecture Notes in Computer
  Science}, pages 361--373. Springer.

\bibitem[Devanur et~al., 2015]{devanur2015envy}
Devanur, N.~R., Hartline, J.~D., and Yan, Q. (2015).
\newblock {Envy freedom and prior-free mechanism design}.
\newblock {\em JET}, 156:103--143.

\bibitem[Devanur and Hayes, 2009]{devanur2009adwords}
Devanur, N.~R. and Hayes, T.~P. (2009).
\newblock {The Adwords Problem: Online Keyword Matching with Budgeted Bidders
  Under Random Permutations}.
\newblock In {\em Proc. EC}, pages 71--78.

\bibitem[D{\"u}tting et~al., 2017]{dutting2017modularity}
D{\"u}tting, P., Talgam-Cohen, I., Roughgarden, T., et~al. (2017).
\newblock Modularity and greed in double auctions.
\newblock {\em Games and Economic Behavior}, 105(C):59--83.

\bibitem[Feldman and Gonen, 2016]{feldman2016double}
Feldman, M. and Gonen, R. (2016).
\newblock Double-sided markets with strategic multi-dimensional players.
\newblock {\em arXiv preprint arXiv:1603.08717}.

\bibitem[Godby, 1999]{godby1999market}
Godby, R. (1999).
\newblock Market power in emission permit double auctions.
\newblock {\em Research in experimental economics}, 7:121--162.

\bibitem[Goel et~al., 2016]{goel2016reservation}
Goel, G., Leonardi, S., Mirrokni, V., Nikzad, A., and Paes-Leme, R. (2016).
\newblock Reservation exchange markets for internet advertising.
\newblock In {\em LIPIcs}, volume~55.

\bibitem[Goldberg et~al., 2006]{Goldberg2006Competitive}
Goldberg, A.~V., Hartline, J.~D., Karlin, A.~R., Saks, M., and Wright, A.
  (2006).
\newblock {Competitive auctions}.
\newblock {\em GEB}, 55:242--269.

\bibitem[Goldberg et~al., 2001]{Goldberg2001Competitive}
Goldberg, A.~V., Hartline, J.~D., and Wright, A. (2001).
\newblock {Competitive Auctions and Digital Goods}.
\newblock In {\em Proc. SODA '01}.

\bibitem[Gonen et~al., 2007]{Gonen2007Generalized}
Gonen, M., Gonen, R., and Pavlov, E. (2007).
\newblock {Generalized Trade Reduction Mechanisms}.
\newblock In {\em Proc. EC}, pages 20--29.

\bibitem[Gonen and Egri, 2017]{gonen2017dycom}
Gonen, R. and Egri, O. (2017).
\newblock Dycom: A dynamic truthful budget balanced double-sided combinatorial
  market.
\newblock In {\em Proc. AAMAS '17}, pages 1556--1558.

\bibitem[Gul and Stacchetti, 2000]{gul2000english}
Gul, F. and Stacchetti, E. (2000).
\newblock {The English Auction with Differentiated Commodities}.
\newblock {\em JET}, 92(1):66--95.

\bibitem[Hasbrouck, 1992]{hasbrouck1992using}
Hasbrouck, J. (1992).
\newblock Using the torq database.
\newblock {\em New York Stock Exchange}, 11.

\bibitem[Hirai and Sato, 2017]{hirai2017polyhedral}
Hirai, H. and Sato, R. (2017).
\newblock Polyhedral clinching auctions for two-sided markets.
\newblock {\em arXiv preprint arXiv:1708.04881}.

\bibitem[Hsu et~al., 2016]{Hsu2016Do}
Hsu, J., Morgenstern, J., Rogers, R., Roth, A., and Vohra, R. (2016).
\newblock Do prices coordinate markets?
\newblock {\em SIGecom Exch.}, 15(1):84--88.

\bibitem[Huang et~al., 2002]{huang2002design}
Huang, P., Scheller-Wolf, A., and Sycara, K. (2002).
\newblock Design of a multi--unit double auction e--market.
\newblock {\em Computational Intelligence}, 18(4):596--617.

\bibitem[Kojima and Yamashita, 2014]{kojima2014double}
Kojima, F. and Yamashita, T. (2014).
\newblock Double auction with interdependent values: Incentives and efficiency.
\newblock {\em Th.Econ.}

\bibitem[Lee and Radhakrishna, 2000]{lee2000inferring}
Lee, C.~M. and Radhakrishna, B. (2000).
\newblock Inferring investor behavior: Evidence from torq data.
\newblock {\em Journal of Financial Markets}, 3(2):83--111.

\bibitem[Leyton-Brown et~al., 2017]{leyton2017economics}
Leyton-Brown, K., Milgrom, P., and Segal, I. (2017).
\newblock Economics and computer science of a radio spectrum reallocation.
\newblock {\em Proceedings of the National Academy of Sciences},
  114(28):7202--7209.

\bibitem[Loertscher and Mezzetti, 2016]{loertscher2016dominant}
Loertscher, S. and Mezzetti, C. (2016).
\newblock Dominant strategy, double clock auctions with estimation-based
  t{\^a}tonnement.
\newblock Technical report.
\newblock https://sites.google.com/site/clamezzetti/work-papers.

\bibitem[McAfee, 1992]{mcafee1992dominant}
McAfee, R.~P. (1992).
\newblock {A dominant strategy double auction}.
\newblock {\em JET}, 56(2):434--450.

\bibitem[Myerson and Satterthwaite, 1983]{myerson1983efficient}
Myerson, R.~B. and Satterthwaite, M.~A. (1983).
\newblock {Efficient mechanisms for bilateral trading}.
\newblock {\em JET}, 29(2):265--281.

\bibitem[Nisan et~al., 2007]{nisan2007introduction}
Nisan, N., Roughgarden, T., Tardos, E., and Vazirani, V., editors (2007).
\newblock {\em {Algorithmic Game Theory}}.
\newblock Cambridge University Press.

\bibitem[Plott and Gray, 1990]{plott1990multiple}
Plott, C.~R. and Gray, P. (1990).
\newblock The multiple unit double auction.
\newblock {\em Journal of Economic Behavior \& Organization}, 13(2):245--258.

\bibitem[Rustichini et~al., 1994]{rustichini1994convergence}
Rustichini, A., Satterthwaite, M.~A., and Williams, S.~R. (1994).
\newblock Convergence to efficiency in a simple market with incomplete
  information.
\newblock {\em Econometrica}, pages 1041--1063.

\bibitem[Segal-Halevi and Hassidim, 2017]{segalhalevi2017truthful}
Segal-Halevi, E. and Hassidim, A. (2017).
\newblock {Truthful Bilateral Trade is Impossible even with Fixed Prices}.
\newblock {\em arXiv preprint}.

\bibitem[Segal-Halevi et~al., 2016a]{segal2016mida}
Segal-Halevi, E., Hassidim, A., and Aumann, Y. (2016a).
\newblock Mida: A multi item-type double-auction mechanism.
\newblock {\em arXiv preprint 1604.06210}.

\bibitem[Segal-Halevi et~al., 2016b]{segal2016sbba}
Segal-Halevi, E., Hassidim, A., and Aumann, Y. (2016b).
\newblock Sbba: A strongly-budget-balanced double-auction mechanism.
\newblock In {\em Proc. SAGT '16}, pages 260--272.

\bibitem[Sturm, 2008]{sturm2008double}
Sturm, B. (2008).
\newblock Double auction experiments and their relevance for emissions trading.
\newblock In {\em Emissions Trading}, pages 49--68. Springer.

\bibitem[Vickrey, 1961]{vickrey1961counter}
Vickrey, W. (1961).
\newblock {Counterspeculation, Auctions, and Competitive Sealed Tenders}.
\newblock {\em J Finance}, 16(1):8--37.

\bibitem[Wang et~al., 2010]{Wang2010TODA}
Wang, S., Xu, P., Xu, X., Tang, S., Li, X., and Liu, X. (2010).
\newblock {TODA: Truthful Online Double Auction for Spectrum Allocationx in
  Wireless Networks}.
\newblock In {\em New Frontiers in Dynamic Spectrum}, pages 1--10. IEEE.

\bibitem[Xu et~al., 2010]{xu2010secondary}
Xu, H., Jin, J., and Li, B. (2010).
\newblock {A Secondary Market for Spectrum}.
\newblock In {\em Proc. INFOCOM}, pages 1--5. IEEE.

\bibitem[Yao et~al., 2011]{Yao2011Efficient}
Yao, E., Lu, L., and Jiang, W. (2011).
\newblock {An efficient truthful double spectrum auction design for dynamic
  spectrum access}.
\newblock In {\em Proc. CROWNCOM}, pages 181--185. IEEE.

\bibitem[Yoon, 2008]{yoon2008participatory}
Yoon, K. (2008).
\newblock The participatory vickrey--clarke--groves mechanism.
\newblock {\em JME}, 44(3):324--336.

\end{thebibliography}
\end{document}